\newcommand{\R}{{\ensuremath{\mathbb{R}}}}
\newcommand{\N}{{\ensuremath{\mathbb{N}}}}
\newcommand{\E}{{\ensuremath{\mathbb{E}}}}
\newcommand{\p}{{\ensuremath{\mathbb{P}}}}
\newcommand{\q}{{\ensuremath{\mathbb{Q}}}}
\newcommand{\qq}{\scriptscriptstyle {\ensuremath{\mathbb{Q}}}}
\newcommand{\calF}{\mathcal{F}}
\theoremstyle{definition}
\newtheorem{proposition}{Proposition}[section]
\newtheorem{corollary}{Corollary}[section]
\newtheorem{theorem}{Theorem}[section]
\newcommand{\rmd}{{\rm d}}
\newcommand{\rme}{{\rm e}}
\def\ep{\varepsilon}
\newcommand{\1}{\mathbf{1}}
\begin{document}

\title{Jump-telegraph models for the short rate: pricing and convexity adjustments of zero coupon bonds}
\author[1]{Oscar L\'opez\thanks{ojlopeza@unal.edu.co}}
\author[2]{Gerardo E. Oleaga\thanks{goleaga@ucm.es}}
\author[1]{Alejandra S\'anchez\thanks{asanchezva@unal.edu.co}}

\affil[1]{\small Universidad Nacional de Colombia, Carrera 45 No. 26-85, Bogot\'a, Colombia}
\affil[2]{\small Departamento de An\'alisis Matem\'atico y Matem\'atica Aplicada \& Instituto de Matem\'atica Interdisciplinar, Universidad Complutense de Madrid, Spain}

\maketitle

\begin{abstract}
In this article, we consider a Markov-modulated model with jumps for short rate dynamics. We obtain closed formulas for the term structure and forward rates using the properties of the jump-telegraph process and the expectation hypothesis. The results are compared with the numerical solution of the corresponding partial differential equation.
\end{abstract}

{\small\textbf{Keywords}: Jump-telegraph process, Markov-modulated model, term structure, zero coupon bond, short rate model, convexity adjustment.}

\section{Introduction}\label{sec:Intro}

Regime changes associated to unexpected events in the economy provide an important field of study in quantitative finance. In the dynamics of the interest rates, the impact of those changes appears in the form of stochastic jumps. A well-known mathematical tool to model these types of events is provided by the family of Markov-modulated processes (see \cite{landen2000bond}, \cite{mamon2007hidden}). Among them, the so-called \emph{telegraphic processes with jumps} are of particular importance for a number of reasons (see \cite{kolesnik2013telegraph}, \cite{lopez2012kac}). First, they capture some of the stylized facts reported in the  literature (see \cite{wu2007exact}, \cite{johannes2004statistical} and the references therein); next, closed formulas for the primary descriptors (mean, variance, moment generating functions) are available; therefore, they are good candidates to model interest rate dynamics, providing explicit results for fixed income instruments.

In this work, we obtain formulas for the term structure modeled by the short rate dynamic under jump-telegraphic Markov-modulated processes. The models proposed (not necessarily affine) are such that the drift, jump size, and jump intensity depend on a continuous time Markov chain with two states. In our approach, jumps are due to a sudden change in the economic regime, that is, a switch in the market environment. For the applications considered, we require the integral of this process, whose primary characteristics are not known (see \cite{orsingher1999vector}). To bypass this difficulty, the so-called \emph{rational expectation hypothesis} is assumed. This implies that the forward instantaneous rate can be computed as the expected value of the future short rate. This approximation, also known as a \emph{convexity adjustment} (see \cite{pelsser2003mathematical}), is key to obtaining a closed formula for zero coupon bonds. This result is subsequently validated using a different approach. Under the no-arbitrage hypothesis, we provide the system of Partial Differential Equations (PDEs) that determine the price of the zero coupon bonds. A numerical approximation of the solutions to this system shows a close correspondence of both methods. In the second approach, the expectation hypothesis is not used; therefore, the convexity adjustment can be estimated.

This article is organized as follows. In the following section, the jump-telegraph models are defined; some of their properties are shown and some results connected to equivalent measures and martingales are proven.
In Section \ref{sec3} a general model for the short rate evolution is presented, involving a Markov-modulated process with jumps. The change of measure and an extension of the Feynman--Ka\v{c} theorem are analyzed. In Section \ref{sec4}, we provide analytic formulas for the forward rates in four particular cases of the general model. In Section \ref{sec5}, we provide formulas for the zero coupon bonds, as well as numerical approximations for the PDE system obtained for each example in Section \ref{sec4}. Conclusions are provided in the last section.

%%%%%%%%%%%%%%%%%%%%%%%%%%%%%%%%%%%%%%%%%%%%%%%%%%%%%%%%%%%%%%%%%%%%%

\section{Jump-telegraph process: characterization and properties}\label{sec:JT}
We present herein the jump-telegraph process and explore two types of technical properties. The first group (Propositions 2.1 and 2.2) is connected with its primary parameters and moment generating functions. These are required to provide an approximating formula for the forward rate in Section \ref{sec4}.  The second group of results (Theorem 2.1 and its corollary) will be used to provide a characterization of the equivalent probability measures in Section \ref{sec3}. This is performed through a change of measure process with specific martingale properties. 

Let $T>0$ be a fixed time and $(\Omega,\mathcal{F},\{\mathcal{F}_t\}_{t\in[0,T]},\p)$ a filtered probability space under the typical hypotheses. In this space, we consider a Markov chain in continuous time $\ep=\{\ep(t)\}_{t\in[0,T]}$, with state space $\{0,1\}$ and infinitesimal generator given by $\Lambda:=\left(\begin{smallmatrix}-\lambda_0 & \lambda_0 \\ \lambda_1 & -\lambda_1\end{smallmatrix}\right)$, $\lambda_0$, $\lambda_1>0$. 
Denoted by $\left\{ \tau_{n}\right\} _{n\geq1}$ the switching times of the Markov chain $\varepsilon$ and defining $\tau_0:=0$,
it is known that the inter-arrival times $\left\{ \tau_{n}-\tau_{n-1}\right\} _{n\geq1}$ are independent, exponential random variables, with
$\mathbb{P}\left\{ \tau_{n}-\tau_{n-1}>t\mid\varepsilon\left(\tau_{n-1}\right)=i\right\} =e^{-\lambda_{i}t}$, $i\in\{ 0,1\}$.

Let $N=\left\{ N_{t}\right\} _{t\in[0,T]}$ be the process that counts the number of state switches of the chain $\varepsilon$, defined as 
\begin{equation*}
N_{t}:=\sum_{n\geq1} \1_{\left\{ \tau_{n}\leq t\right\} }, \quad N_{0}=0.
\end{equation*}
$N$ is a non-homogeneous Poisson process with stochastic intensity given by $\{\lambda_{\varepsilon\left(t\right)}\}_{t\in[0,T]}$, a special case of the so-called \emph{doubly stochastic Poisson processes} see \cite{bremaud1981point}.

Let $c_0$, $c_1$, $h_0$, and $h_1$ be real numbers such that $c_0\not= c_1$, and $h_0,h_1\not=0$, and $\ep_j=\ep(\tau_j-)$ is the value of the Markov chain $\ep$ just before the $j$-th change at time $\tau_j$.  We define the jump-telegraph process $Y=\{Y_t\}_{t\in[0,T]}$ as the sum
\begin{equation*}
Y_t:=X_t+J_t=\int_0^t c_{\ep(s)}\,\rmd s+\sum_{j=1}^{N_t}h_{\ep_j},
\end{equation*}
where $X=\{X_t\}_{t\in[0,T]}$ is known as an \textit{asymmetric telegraph process} \cite{lopez2014asymmetric}, and  $J=\{J_t\}_{t\in[0,T]}$ is a pure jump process \cite{lopez2012kac}.

By fixing the initial state $\ep(0)=i\in\{0,1\}$, we have the following equality in the distribution:
\begin{equation}\label{eq:edY}
Y_t\overset{D}{=}c_it\1_{\{t<\tau_1\}}+\bigl[c_i\tau_1+h_i+\widetilde{Y}_{t-\tau_1}\bigr]\1_{\{t>\tau_1\}},
\end{equation}
for any $t>0$, where  $\widetilde{Y}=\{\widetilde{Y}_t\}_{t\in[0,T]}$ is a jump-telegraph process independent of $Y$, driven
by the same parameters, but starting from the opposite initial state $1-i$.

We denote by $p_i(x, t)$ the following density functions:
\begin{equation*}
p_i(x,t):=\frac{\p_i\{Y_t\in \rmd x\}}{\rmd x},\quad i=0,1,
\end{equation*}
where $\p_i\{\cdot\}=\p\{\cdot\mid \ep(0)=i\}$. That is, for any Borel set $\Delta$, $\Delta\subset\R$, $\int_{\Delta} p_i(x,t)\rmd x=\p_i\{Y_t\in\Delta\}$.

By \eqref{eq:edY} and the total probability theorem, the functions $p_i(x,t)$ satisfy the following system of integral equations on $[0,T]\times \R$
\begin{equation*}
p_i(x,t)=\rme^{-\lambda_it}\delta(x-c_it)+\int_0^t p_{1-i}(x-c_is-h_i,t-s)\lambda_i\rme^{-\lambda_is}\rmd s, \quad i=0,1,
\end{equation*}
where $\delta(\cdot)$ is Dirac's delta function.

We denote by $\E_i[\cdot]=\E[\cdot\mid \ep(0)=i]$, $i=0,1$ the conditional expectations under the initial value of the Markov chain $\ep$.
\begin{proposition}[\cite{kolesnik2013telegraph} Section 4.1.2]\label{pro:EY}
For any $t>0$, the conditional expectations $m_i(t):=\E_i[Y_t]$, $i=0,1$ of the jump-telegraph process $Y$ are given by
\begin{equation}\label{eq:solvmY}
m_i(t)=\frac{1}{2\lambda}\left[(\lambda_1d_0+\lambda_0d_1)t+(-1)^i\lambda_i(d_0-d_1)\left(\frac{1-\rme^{-2\lambda t}}{2\lambda}\right)\right],
\end{equation}
where $2\lambda:=\lambda_0+\lambda_1$ and $d_i=c_i+\lambda_ih_i$, $i=0,1$.
\end{proposition}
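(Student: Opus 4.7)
My plan is to exploit the one-step distributional identity \eqref{eq:edY} to derive a linear ODE system for the pair $(m_0,m_1)$, and then solve it using the spectral structure of $\Lambda$.

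First, I take expectations in \eqref{eq:edY} conditional on $\ep(0)=i$. Since $\tau_1$ is exponential with rate $\lambda_i$ under $\p_i$, and $\widetilde Y_{t-\tau_1}$ is independent of $\tau_1$ with initial state $1-i$ (and hence has mean $m_{1-i}(t-\tau_1)$), splitting the event $\{\tau_1<t\}$ against the density $\lambda_i\rme^{-\lambda_i s}\,\rmd s$ yields the renewal-type equation
\begin{equation*}
m_i(t)=c_it\rme^{-\lambda_i t}+\int_0^t\bigl(c_is+h_i+m_{1-i}(t-s)\bigr)\lambda_i\rme^{-\lambda_i s}\,\rmd s,\quad i=0,1.
\end{equation*}

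Next I would evaluate the two elementary $s$-integrals (the boundary term $c_it\rme^{-\lambda_it}$ telescopes with the contribution from the $c_is$ factor, leaving $d_i(1-\rme^{-\lambda_it})/\lambda_i$) and differentiate the remaining convolution in $t$. The system then reduces to the linear ODE
\begin{equation*}
m_i'(t)=d_i+\lambda_i\bigl(m_{1-i}(t)-m_i(t)\bigr),\qquad m_i(0)=0,
\end{equation*}
which in vector form reads $\vec m'(t)=\vec d+\Lambda\vec m(t)$.

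Finally, since $\Lambda$ has eigenvalues $0$ and $-2\lambda$, the general solution takes the form $m_i(t)=a_it+b_i+\gamma_i\rme^{-2\lambda t}$; inserting this ansatz into the ODE, matching coefficients, and imposing $m_i(0)=0$ uniquely determines the constants and recovers \eqref{eq:solvmY}. As a quick sanity check, the right-hand side of \eqref{eq:solvmY} satisfies $m_i'(0)=d_i$, consistent with the ODE at $t=0$. The most delicate step is the passage from the renewal equation to the clean ODE, since it requires careful differentiation of a convolution with a state-dependent exponential kernel; everything after that reduction is routine linear algebra.
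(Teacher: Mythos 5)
Your argument is correct: the renewal equation you obtain from \eqref{eq:edY} is exactly right, the cancellation leaving $d_i(1-\rme^{-\lambda_i t})/\lambda_i$ checks out, and the resulting system $m_i'(t)=d_i+\lambda_i\bigl(m_{1-i}(t)-m_i(t)\bigr)$ with $m_i(0)=0$ is solved correctly by the ansatz built on the eigenvalues $0$ and $-2\lambda$ of $\Lambda$, yielding \eqref{eq:solvmY} (one can confirm the stated formula satisfies both the ODE and the initial condition). The paper itself gives no proof of this proposition --- it is quoted from Kolesnik and Ratanov --- and your derivation (condition on the first switching time, reduce to a coupled linear ODE system, solve) is the standard route taken in that reference, so there is nothing further to reconcile.
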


\begin{proposition}[\cite{lopez2012kac} Theorem 3.1]\label{pro:ME}
For any $z\in\mathbb{R}$ and $t>0,$ the moment generating
functions $\phi_{i}(z,t):=\E_{i}[\rme^{zY_{t}}]$, $i=0,1$ of the jump-telegraph process $Y$ are given by
\begin{equation}\label{solgenmom}
\phi_{i}\left(z,t\right)=\rme^{t\left(cz-\lambda\right)}\left(\cosh\left(t\sqrt{D}\right)+(-1)^i\left(az-\kappa+(-1)^i\lambda_{i}\rme^{zh_{i}}\right)\frac{\sinh\left(t\sqrt{D}\right)}{\sqrt{D}}\right),
\end{equation}
where $D=\left(az-\kappa\right)^{2}+\lambda_{0}\lambda_{1}\rme^{zH}$, $2c=c_0+c_1$, $2a=c_0-c_1$,
$2\kappa=\lambda_{0}-\lambda_{1}$ and $H=h_0+h_1$.
\end{proposition}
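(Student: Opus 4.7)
The plan is to convert the distributional identity \eqref{eq:edY} into a system of linear ODEs in $t$ for the pair $(\phi_0(z,\cdot),\phi_1(z,\cdot))$, to recognise that it has constant coefficients in $t$, and to solve it explicitly by means of a $\cosh$/$\sinh$ ansatz.

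First, I would condition on the first switching time $\tau_1$, which under $\p_i$ is exponential with rate $\lambda_i$. Applying $\E_i[\rme^{z\,\cdot}]$ to both sides of \eqref{eq:edY} and exploiting the independence of $\widetilde Y_{t-\tau_1}$ from $\tau_1$ (so that $\E_i[\rme^{z\widetilde Y_{t-s}}]=\phi_{1-i}(z,t-s)$) yields the Volterra-type equation
\begin{equation*}
\phi_i(z,t) = \rme^{(zc_i-\lambda_i)t} + \lambda_i\rme^{zh_i}\int_0^t \rme^{(zc_i-\lambda_i)s}\phi_{1-i}(z,t-s)\,\rmd s.
\end{equation*}
After the substitution $u=t-s$, differentiating in $t$ gives the linear system
\begin{equation*}
\partial_t\phi_i(z,t) = (zc_i-\lambda_i)\phi_i(z,t) + \lambda_i\rme^{zh_i}\phi_{1-i}(z,t),\qquad \phi_i(z,0)=1,\ i=0,1.
\end{equation*}

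Next, I would use the reparametrisation $zc_i-\lambda_i=(cz-\lambda)+(-1)^i(az-\kappa)$ to read off that the coefficient matrix has trace $2(cz-\lambda)$ and determinant $(cz-\lambda)^2-D$; hence its eigenvalues are $cz-\lambda\pm\sqrt D$. This suggests the ansatz
\begin{equation*}
\phi_i(z,t) = \rme^{t(cz-\lambda)}\bigl(A_i\cosh(t\sqrt D)+B_i\sinh(t\sqrt D)\bigr).
\end{equation*}
The initial condition forces $A_i=1$, and equating the derivative of the ansatz at $0$ with $\partial_t\phi_i(z,0)=zc_i-\lambda_i+\lambda_i\rme^{zh_i}$ (obtained by plugging $t=0$ into the ODE) yields $B_i\sqrt D = (-1)^i(az-\kappa)+\lambda_i\rme^{zh_i}$, which factorises as $(-1)^i\bigl(az-\kappa+(-1)^i\lambda_i\rme^{zh_i}\bigr)$, matching \eqref{solgenmom}.

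The main obstacle is essentially algebraic bookkeeping: verifying that the two expressions for $B_0$ and $B_1$ can be packaged into the compact $(-1)^i$-form used in the statement, and checking that the formula remains valid and real-valued when $D\le 0$ by interpreting $\sinh(t\sqrt D)/\sqrt D=\sum_{k\ge 0}t^{2k+1}D^k/(2k+1)!$ as an entire function of $D$. Everything else reduces to standard renewal-type conditioning and constant-coefficient linear ODE theory.
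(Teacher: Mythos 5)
Your derivation is correct. Note, however, that the paper does not prove this proposition at all: it is quoted verbatim from L\'opez--Ratanov (Theorem 3.1 of the cited reference), so there is no in-paper proof to compare against. Your route --- conditioning on the first switching time via \eqref{eq:edY} to get a Volterra equation, converting it to the constant-coefficient linear system $\partial_t\phi_i=(zc_i-\lambda_i)\phi_i+\lambda_i\rme^{zh_i}\phi_{1-i}$, and solving with a $\cosh/\sinh$ ansatz --- is exactly the standard argument behind the cited result, and all your algebra checks out: $zc_i-\lambda_i=(cz-\lambda)+(-1)^i(az-\kappa)$, the trace is $2(cz-\lambda)$, the determinant is $(cz-\lambda)^2-D$, and $B_i\sqrt{D}=(-1)^i\bigl(az-\kappa+(-1)^i\lambda_i\rme^{zh_i}\bigr)$. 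Two small points could be tightened. First, to justify that matching the value and the derivative at $t=0$ determines the solution, observe (e.g.\ via Cayley--Hamilton) that each component satisfies the scalar equation $\phi''-2(cz-\lambda)\phi'+\bigl((cz-\lambda)^2-D\bigr)\phi=0$, whose solutions are exactly the functions of your ansatz form; this closes the gap between ``the ansatz fits at $t=0$'' and ``the ansatz is the solution.'' Second, your worry about $D\le 0$ is vacuous here: for real $z$ one has $D=(az-\kappa)^2+\lambda_0\lambda_1\rme^{zH}>0$ since $\lambda_0,\lambda_1>0$, so $\sqrt{D}$ is real and the eigenvalues $cz-\lambda\pm\sqrt{D}$ are always distinct; the entire-function interpretation of $\sinh(t\sqrt{D})/\sqrt{D}$ is only needed if one extends to complex $z$.
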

We define the filtration $\mathbf{F}=\{\mathcal{F}_t\}_{t\in[0,T]}$ generated together by the Markov chain and the Poisson process:
\begin{equation*}
\mathcal{F}_t=\sigma(\ep(s),s\in[0,t]) \vee \sigma(N_s, s\in[0,t])
\end{equation*} 

The following two results are key to obtaining a set of equivalent measures (see Proposition 1.7.1.1 in \cite{jeanblanc2009mathematical}). 

\begin{theorem}
The following processes are $\mathbf{F}$-martingales
\begin{align}
Z_t&:=\sum_{j=1}^{N_t}h_{\ep_j}-\int_0^t h_{\ep(s)}\lambda_{\ep(s)}\rmd s, \label{eq:Zcom}\\
\mathcal{
E}_t(Z)&:=\exp\left(-\int_0^t h_{\ep(s)}\lambda_{\ep(s)}\rmd s\right)\prod_{j=1}^{N_t}(1+h_{\ep_j}),\quad \text{for}\quad h_0,h_1>-1. 
\label{eq:expZcom}
\end{align}
Here, $\mathcal{E}_t(\cdot)$ denotes the stochastic (Dol\'eans-Dade) exponential (see, e.g., \cite{jeanblanc2009mathematical}, Section 9.4.3).
\end{theorem}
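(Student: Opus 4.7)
The plan is to treat both claims through the theory of marked point processes with stochastic intensity. The backbone is that, because $N$ is a doubly stochastic Poisson process with $\mathbf{F}$-stochastic intensity $\lambda_{\ep(t)}$ (already recorded in Section \ref{sec:JT}), the compensated process $M_t := N_t - \int_0^t \lambda_{\ep(s)}\rmd s$ is an $\mathbf{F}$-martingale. This will be the engine that drives both statements.

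For the first process $Z_t$, the key observation is that the mark $h_{\ep_j}$ at the switching time $\tau_j$ equals $h_{\ep(\tau_j-)}$, so the marked sum admits the stochastic-integral representation
\begin{equation*}
\sum_{j=1}^{N_t} h_{\ep_j} \;=\; \int_{(0,t]} h_{\ep(s-)}\,\rmd N_s.
\end{equation*}
Since $\ep$ is càdlàg and adapted, the integrand $h_{\ep(s-)}$ is $\mathbf{F}$-predictable and bounded by $\max(|h_0|,|h_1|)$. Subtracting the compensator then gives $Z_t = \int_{(0,t]} h_{\ep(s-)}\,\rmd M_s$, which is an $\mathbf{F}$-martingale by the standard integrability test for stochastic integrals against compensated point processes (a bounded predictable integrand against a martingale with integrable variation up to time $T$). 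As a self-contained alternative, I would verify the martingale property by conditioning on the first switching time after time $s$ using the regenerative representation \eqref{eq:edY}: between jumps $\ep$ is constant, so the compensator grows linearly, and the exponential distribution of the inter-arrival time makes the mean of the next jump increment cancel exactly against the compensator grown up to it.

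For the second process $\mathcal{E}_t(Z)$, I would show it solves the SDE $\rmd\mathcal{E}_t(Z) = \mathcal{E}_{t-}(Z)\,\rmd Z_t$, $\mathcal{E}_0(Z)=1$, which makes it a local martingale by the previous part. Writing $U_t := \exp(-\int_0^t h_{\ep(s)}\lambda_{\ep(s)}\rmd s)$ (continuous, of finite variation) and $V_t := \prod_{j=1}^{N_t}(1+h_{\ep_j})$ (pure jump, $\Delta V_{\tau_j} = V_{\tau_j-} h_{\ep_j}$), the product formula yields $\rmd(U_tV_t) = U_{t-}V_{t-}\bigl(-h_{\ep(t)}\lambda_{\ep(t)}\rmd t + h_{\ep(t-)}\rmd N_t\bigr) = \mathcal{E}_{t-}(Z)\,\rmd Z_t$. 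The hypothesis $h_0,h_1>-1$ ensures that every factor $1+h_{\ep_j}$ is strictly positive, so $\mathcal{E}_t(Z)>0$ almost surely.

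The main obstacle will be upgrading the local martingale to a true martingale on $[0,T]$. For this I would bound $\mathcal{E}_t(Z)$ pathwise by $\exp\bigl(T\,\max_i|h_i\lambda_i|\bigr)\bigl(1+\max_i|h_i|\bigr)^{N_T}$ and note that $N_T$ is stochastically dominated by a homogeneous Poisson random variable with parameter $T\max(\lambda_0,\lambda_1)$, whose exponential moments are finite. This gives $\E[\sup_{t\le T}\mathcal{E}_t(Z)]<\infty$, which promotes the local martingale to a uniformly integrable martingale on $[0,T]$ and completes the proof.
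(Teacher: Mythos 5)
Your proof is correct, but it follows a genuinely different route from the paper's. The paper never touches compensators or the Dol\'eans-Dade SDE: for \eqref{eq:Zcom} it observes that $Z$ is itself a jump-telegraph process with drift $c_i=-h_i\lambda_i$, so Proposition \ref{pro:EY} gives $d_i=c_i+\lambda_ih_i=0$ and hence $\E_i[Z_t]=0$; for \eqref{eq:expZcom} it writes $\mathcal{E}_t(Z)=\rme^{\widehat{Z}_t}$ with $\widehat{Z}$ the jump-telegraph process with jumps $\log(1+h_i)$ and checks from the explicit moment generating function of Proposition \ref{pro:ME} that $\E_i[\rme^{\widehat{Z}_t}]=1$. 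In both cases the conditional martingale property is then bootstrapped from these constant-expectation identities via the strong Markov property \eqref{eq:MSP}. You instead take the stochastic-intensity property of $N$ as the input, represent $Z_t=\int_{(0,t]}h_{\ep(s-)}\,\rmd M_s$ with a bounded predictable integrand, and identify $\mathcal{E}_t(Z)$ as the solution of $\rmd\mathcal{E}_t=\mathcal{E}_{t-}\rmd Z_t$, upgrading the resulting local martingale with the bound $\mathcal{E}_t(Z)\le\rme^{T\max_i|h_i\lambda_i|}(1+\max_i|h_i|)^{N_T}$ and the finite exponential moments of $N_T$; all of these steps are sound. Two remarks on the trade-off. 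First, your argument inverts the paper's logical order: the paper \emph{derives} the martingale property of $M$ as part (i) of Corollary \ref{LyEmarting} (the case $h_0=h_1=1$), whereas you \emph{assume} it as the defining property of the stochastic intensity; this is legitimate (it is exactly what the cited reference \cite{bremaud1981point} establishes for doubly stochastic Poisson processes, and the paper itself invokes that structure), but it makes the paper's Corollary partly an input rather than an output, so you should say explicitly that you are importing it from the point-process literature rather than from this paper. Second, your route is more general and more standard --- it would survive if the explicit formulas \eqref{eq:solvmY} and \eqref{solgenmom} were unavailable --- while the paper's route is more self-contained given Propositions \ref{pro:EY} and \ref{pro:ME} and requires no integrability upgrade, since the moment generating function computation already yields a genuine (not merely local) martingale.
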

\begin{proof}
Observe that $Z$ is a jump-telegraph process with $c_i=-h_i\lambda_i$, $i=0,1$. Subsequently, by Proposition \ref{pro:EY}, $\E_i[Z_t]=0$, $i=0,1$. Let $s$, $0\leq s\leq t$ be fixed. Let $i\in\{0,1\}$ be the value of $\ep$ at time $s$ and let $k\in\N$ be the value of $N$ at time $s$. By the strong Markov property, we have the following conditional identities in the distribution   
\begin{equation}\label{eq:MSP}
\begin{split}
\bigl.\ep(s+u)\,\bigr|_{\{\ep(s)=i\}}\overset{D}{=}\bigl.\tilde{\ep}(u)\,\bigr|_{\{\tilde{\ep}(0)=i\}},\quad &\bigl.N_{s+u}\,\bigr|_{\{\ep(s)=i\}}\overset{D}{=}N_s+\bigl.\widetilde{N}_u\,\bigr|_{\{\tilde{\ep}(0)=i\}},\quad u\geq0,\\
\bigl.\tau_{k+j}\,\bigr|_{\{\ep(s)=i\}}&\overset{D}{=}\tilde{\tau}_{j}\,\bigr|_{\{\tilde{\ep}(0)=i\}}, \quad j=1,2,\dots
\end{split}
\end{equation}
where $\tilde{\ep}$, $\widetilde{N}$ and $\{\tilde{\tau_j}\}$ are copies of the processes $\ep$, $N$, and $\{\tau_j\}$, independents of $\mathcal{F}_s$. Subsequently, using the zero conditional property of $Z$ and \eqref{eq:MSP}, we obtain
\begin{equation*}
\E[Z_t-Z_s\mid\mathcal{F}_s]=\E_i\left[\sum_{j=1}^{\widetilde{N}_{t-s}} h_{\tilde{\ep}_j}-\int_0^{t-s} h_{\tilde{\ep}(u)}\lambda_{\tilde{\ep}(u)}\rmd u\right]=0,
\end{equation*}
and the first part follows. 

The jump-telegraph process is defined as follows:
\begin{equation*}
\widehat{Z}_t=\sum_{j=1}^{N_t}\log(1+h_{\ep_j})-\int_0^t h_{\ep(s)}\lambda_{\ep(s)}\rmd s,
\end{equation*}
Subsequently, we have $\mathcal{E}_t(Z)=\rme^{\widehat{Z}_t}$ and by Proposition \ref{pro:ME} we obtain $\E_i[\rme^{\widehat{Z}_t}]=1$, $i=0,1$. Using this and \eqref{eq:MSP}, we obtain
\begin{equation*}
\E\bigl[\rme^{\widehat{Z}_t-\widehat{Z}_s}\,\bigl|\bigr.\,\mathcal{F}_s\bigr]=\E_i\left[\exp\left(\,\sum_{j=1}^{\widetilde{N}_{t-s}}\log(1+h_{\tilde{\ep}_j})-\int_0^{t-s} h_{\tilde{\ep}(u)}\lambda_{\tilde{\ep}(u)}\rmd u\right)\right]= 1,
\end{equation*}
and the desired result is obtained.
\end{proof}

\begin{corollary}\label{LyEmarting}
The following processes are $\mathbf{F}$-martingales
\begin{align}
M_t&:=N_t-\int_0^t \lambda_{\ep(s)}\rmd s,\label{M}\\
L_t^\theta&:=\exp\left(\int_0^t(1-\theta_{\ep(s)})\lambda_{\ep(s)}\rmd s\right)\prod_{j=1}^{N_t}\theta_{\ep_j},\quad \text{for}\quad \theta_0,\theta_1>0.\label{Ltheta}
\end{align}
\end{corollary}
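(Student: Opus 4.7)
The plan is to derive both martingale properties by recognizing the two processes in the corollary as instances of the processes $Z$ and $\mathcal{E}(Z)$ from the theorem, obtained by specific choices of the jump parameters $h_0,h_1$. In other words, I do not expect any genuinely new computation; the corollary should read as a dictionary between the abstract parameters $h_i$ and the new symbols $(1,\theta_i)$ appearing in \eqref{M} and \eqref{Ltheta}.

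For the compensated counting process \eqref{M}, I would take $h_0=h_1=1$ in \eqref{eq:Zcom}. Then $\sum_{j=1}^{N_t}h_{\ep_j}=N_t$ and $\int_0^t h_{\ep(s)}\lambda_{\ep(s)}\rmd s=\int_0^t\lambda_{\ep(s)}\rmd s$, so $Z_t=M_t$, and the martingale property is inherited from the theorem.

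For $L_t^\theta$ I would set $h_i:=\theta_i-1$ for $i=0,1$. The hypothesis $\theta_0,\theta_1>0$ translates into $h_0,h_1>-1$, which is exactly the range required by \eqref{eq:expZcom}. With this choice $1+h_{\ep_j}=\theta_{\ep_j}$ and $h_{\ep(s)}\lambda_{\ep(s)}=(\theta_{\ep(s)}-1)\lambda_{\ep(s)}$, so
\[
\mathcal{E}_t(Z)=\exp\!\left(-\int_0^t(\theta_{\ep(s)}-1)\lambda_{\ep(s)}\rmd s\right)\prod_{j=1}^{N_t}\theta_{\ep_j}=L_t^\theta,
\]
and again the conclusion follows from the theorem.

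The only real obstacle is cosmetic rather than substantive: the blanket assumption $h_0,h_1\neq 0$ made at the beginning of Section \ref{sec:JT} formally forbids $\theta_i=1$ in the second claim (and is innocuous for the first). If needed, I would note that when some $\theta_i=1$ the corresponding factor in the product contributes trivially and the compensator term vanishes on the set $\{\ep(s)=i\}$, so the identity $\E[L_t^\theta-L_s^\theta\mid\mathcal{F}_s]=0$ either reduces to a lower-dimensional case or can be recovered by continuity, letting $\theta_i\to 1$ in the formula already established for $\theta_i\neq 1$.
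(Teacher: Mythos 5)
Your proof is correct and takes essentially the same route as the paper's: both obtain \eqref{M} by setting $h_0=h_1=1$ in \eqref{eq:Zcom}, and both obtain \eqref{Ltheta} by recognizing $L^\theta$ as the stochastic exponential $\mathcal{E}_t(\widetilde{Z})$ of the jump-telegraph process with $h_i=\theta_i-1$. Your observation about the degenerate case $\theta_i=1$ (i.e.\ $h_i=0$, excluded by the blanket assumption $h_0,h_1\neq 0$) is a point of care the paper silently glosses over, and your continuity argument disposes of it adequately.
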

\begin{proof}
For the first part, notice that if we use $h_0=h_1=1$, by \eqref{eq:Zcom} the result is obtained. 

For the second part, let
\begin{equation*}
\widetilde{Z}_t=\sum_{j=1}^{N_t}(\theta_{\ep_j}-1)-\int_0^t(\theta_{\ep(s)}-1)\lambda_{\ep(s)}\rmd s.
\end{equation*}
Therefore, $L_t^\theta=\mathcal{E}_t(\widetilde{Z})$; subsequently, by \eqref{eq:Zcom} and \eqref{eq:expZcom}, the process $L^\theta$ is a $\p$-martingale with $\E_i[L_t^\theta]=1$, $i=0,1$.
\end{proof}

%%%%%%%%%%%%%%%%%%%%%%%%%%%%%%%%%%%%%%%%%%%%%%

\section{General Markov-modulated jump-diffusion model for the short rate}\label{sec3}

\subsection{Short rate model}
For each $i\in\{0,1\}$, let $\mu_i:\Omega\times\R\to\R$, $\sigma_i:\Omega\times\R\to\R_{\geq0}$ and $\eta_i:\Omega\times\R\to\R\setminus\{0\}$ are measurable functions; we model the dynamics of the short interest rate by 
\begin{equation}\label{eq:dr}
\rmd r_t=\mu_{\ep(t)}(r_t)\rmd t+\sigma_{\ep(t)}(r_t)\rmd W_t+\eta_{\ep(t^{-})}(r_{t^{-}})\rmd N_t,
\end{equation}
where $\mu$ denotes the mean return rate, $\sigma$ the volatility, and $\eta$ the amplitude of the jumps of the short rate whenever a switch occurs in the Markov chain $\varepsilon$. Here, $W=\{W_t\}_{t\in[0,T]}$ is a standard Wiener process, independent of the Markov chain $\ep$ and hence of the Poisson process $N$.

We define the filtration $\mathbf{F}=\left\{ \mathcal{F}_{t}\right\}_{t\in[0,T]}$ now generated by the Markov chain, Poisson process, and Wiener process:
\begin{equation*}
\mathcal{F}_t=\sigma(\ep(s),s\in[0,t])\vee\sigma(W_s, s\in[0,t])\vee\sigma(N_s, s\in[0,t]).
\end{equation*}
The conditions that guarantee a unique strong solution of \eqref{eq:dr} for each initial value, can be found for instance in \cite{xi2011jump}, Proposition 2.1. Finally, the dynamics of the bank account is given by
\begin{equation*}
\rmd B_{t}=r_{t}B_{t}\rmd t.
\end{equation*}
%%%%%%%%%%%%%%%%%%%%%%%%
\subsection{Change of measure and Feynman--Ka\v{c}'s theorem}
Let $M=\{M_t\}_{t\in[0,T]}$ be the martingale associated to the Poisson process $N$ (see Corollary \ref {LyEmarting}). Let us define a set of equivalent measures using the product of two Girsanov transforms. First, we define the change of measure process for the Wiener process $W$. Hence, we let $L^\psi=\{L^\psi_t\}_{t\in[0,T]}$ be the process defined by
\begin{equation}\label{eq:Lpsi}
L_t^\psi:=\exp\left(-\int_0^t\psi_{\ep(s)}\rmd W_s-\frac12\int_0^t\psi^2_{\ep(s)}\rmd s\right).
\end{equation}
where $\psi_0,\psi_1\in\R$. It is obvious that   $\{\psi_{\ep(t)}\}_{t\in[0,T]}$ satisfies Novikov's condition; therefore,  $L^\psi$ is a $\mathbf{F}$-martingale with $\E[L_T^\psi]=1$.

We now consider the martingale $L^\theta=\{L^\theta_t\}_{t\in[0,T]}$ defined by \eqref{Ltheta}. Subsequently, the process $L=\{L_t\}_{t\in[0,T]}$ defined by $L_t:=L_t^\psi\cdot L_t^\theta$ is an $\mathbf{F}$-martingale with $\E[L_T]=1$ (see \cite{jeanblanc2009mathematical} Section 10.3.1). Therefore, we define a set of equivalent measures using the Radon--Nikodym derivative $\left.\frac{\rmd \q}{\rmd \p}\right|_{\calF_t}:=L_t$, $t\in[0,T]$.

\begin{proposition}
Under measure $\q$, the process
\begin{equation}\label{eq:Wq}
W_t^{\qq}:=W_t-\int_0^t \psi_{\ep(s)} \rmd s, \quad t\in[0,T],
\end{equation}
is a Wiener process, and
\begin{equation}\label{eq:Mq}
M_t^{\qq}:=M_t-\int_0^t\bigl(1-\theta_{\ep(s)}\bigr)\lambda_{\ep(s)}\rmd s=N_t-\int_0^t \theta_{\ep(s)}\lambda_{\ep(s)}\rmd s,
\end{equation}
is a martingale.
\end{proposition}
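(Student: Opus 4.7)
The plan is to recognize this as a simultaneous two-factor Girsanov change of measure. The total likelihood ratio $L_t=L_t^\psi\cdot L_t^\theta$ factors into a Wiener piece $L^\psi$ (continuous, driven only by $W$ via the bounded integrand $\psi_{\ep(s)}$) and a jump piece $L^\theta$ (purely discontinuous, driven by the Markov chain $\ep$ and the Poisson process $N$). Since $W$ is independent of $(\ep,N)$ under $\p$, the continuous martingale part of $L$ comes entirely from $L^\psi$ and its purely discontinuous part from $L^\theta$, so the two changes decouple when acting on their respective target processes. Throughout, the Bayes rule
\[
\E^{\qq}[\,X_t\mid\calF_s]=\frac{1}{L_s}\E[\,L_tX_t\mid\calF_s]
\]
reduces everything to verifying certain $\p$-martingale properties of products with $L$.

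For the first statement, I would apply the standard Girsanov theorem for continuous semimartingales. Since $\psi_0,\psi_1\in\R$ are constants, Novikov's condition is trivial and $L^\psi$ is a genuine $\p$-martingale. If only $L^\psi$ were used, the classical result gives that $W^{\qq}=W-\int_0^\cdot\psi_{\ep(s)}\rmd s$ is a Brownian motion under the corresponding equivalent measure. To include the jump factor $L^\theta$, I would invoke L\'evy's characterization: $W^{\qq}$ is continuous with $[W^{\qq},W^{\qq}]_t=t$, so it suffices to show it is a $\q$-local martingale. This follows by computing, via integration by parts and the fact that $L^\theta$ jumps only on the discrete set $\{\tau_j\}$, that $L\,W^{\qq}$ is a $\p$-local martingale; the key cancellation is that $\rmd \langle W^{\qq},L^\theta\rangle=0$ (since $L^\theta$ is purely discontinuous) and $\rmd\langle W^{\qq},L^\psi\rangle=-\psi_{\ep(s)}L_s^\psi\rmd s$ produces exactly the finite-variation term needed to kill the drift $-\psi_{\ep(s)}\rmd s$ introduced by subtracting $\int_0^\cdot\psi_{\ep(s)}\rmd s$ from $W$.

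For the second statement, I would use the Girsanov-type theorem for counting processes with stochastic intensity (see Br\'emaud \cite{bremaud1981point}): Corollary~\ref{LyEmarting} writes $L^\theta=\mathcal{E}(\widetilde{Z})$ where $\widetilde{Z}_t=\int_0^t(\theta_{\ep(s-)}-1)\rmd M_s$ is the stochastic integral against the $\p$-compensated Poisson martingale $M$. Under the measure with density $L^\theta$, the $\p$-intensity $\lambda_{\ep(s)}$ of $N$ is multiplied by $\theta_{\ep(s)}$, so $N_t-\int_0^t\theta_{\ep(s)}\lambda_{\ep(s)}\rmd s$ becomes a martingale. A direct Bayes-rule verification works too: one shows that $L\cdot M^{\qq}$ is a $\p$-martingale using integration by parts, exploiting that $L^\psi$ is continuous (so has no bracket with the jump process $N$) while $\Delta(L^\theta M^{\qq})$ at $\tau_j$ is compensated exactly by the Stieltjes term $-\theta_{\ep(s)}\lambda_{\ep(s)}\rmd s$.

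The main obstacle is the careful book-keeping in the second paragraph: one must be confident that the jump part $L^\theta$ does not spoil the Girsanov drift correction for $W$, and symmetrically that the continuous part $L^\psi$ does not alter the $N$-intensity. This decoupling is exactly the content of \cite{jeanblanc2009mathematical} Section 10.3.1 cited above, and rests on the orthogonality of the continuous and purely discontinuous martingale parts of $L$, together with the independence of $W$ from $(\ep,N)$.
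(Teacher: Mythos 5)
Your proposal is correct and follows essentially the same route as the paper: Girsanov for the continuous factor $L^\psi$ (with the independence/orthogonality of $W$ and $(\ep,N)$ guaranteeing that $L^\theta$ does not perturb the drift correction), and an integration-by-parts verification that $M^{\qq}$ times the density is a $\p$-martingale for the jump part; you merely make explicit the L\'evy-characterization step that the paper compresses into ``we immediately obtain.'' One small caveat: with $L^\psi$ defined as in \eqref{eq:Lpsi} the bracket term $\rmd\langle W,L^\psi\rangle=-\psi_{\ep(s)}L^\psi_s\,\rmd s$ has the same sign as the subtracted drift rather than the opposite one, so the cancellation you describe requires flipping a sign either in \eqref{eq:Lpsi} or in \eqref{eq:Wq} --- an inconsistency already present in the paper's conventions rather than a flaw in your argument.
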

\begin{proof}
For the first part, considering \eqref{eq:Lpsi}, applying the Girsanov theorem to the Wiener process, and using the independence of $W$ and $N$, we immediately obtain that \eqref{eq:Wq} is a Wiener process under $\q$. 

For the second part, using integration by parts and the independence of $W$ and $N$, we obtain
\begin{align*}
\rmd(M_t^{\qq}L_t^\theta)&=M_{t^-}^{\qq}\rmd L_t^\theta+L_{t^-}^\theta\rmd M_t^{\qq}+\rmd[M^{\qq},L^\theta]_t\\&=M_{t^-}^{\qq}\rmd L_t^\theta+L_{t^-}^\theta\rmd M_t^{\qq}+L_{t^-}^\theta(\theta_{\ep(t^-)}-1)\rmd N_t\\
&=M_{t^-}^{\qq}L_{t^-}^\theta(\theta_{\ep(t^-)}-1)\rmd M_t+L_{t^-}\rmd M_t+L_{t^-}^\theta(\theta_{\ep(t^-)}-1)\rmd M_t\\&=\bigl(M_{t^-}^{\qq}(\theta_{\ep(t^-)}-1)+\theta_{\ep(t^-)}\bigr)L_{t^-}^\theta\rmd M_t. 
\end{align*}
Thus, the process $M^{\qq}L^\theta$ is a $\p$-martingale and the process $M^{\qq}$ is a $\q$-martingale.
\end{proof}
Notice that, because of \eqref{eq:Mq}, the process $N$ under measure $\q$ is a Poisson process with intensities $\lambda^{\qq}_0=\theta_0\lambda_0$ and $\lambda^{\qq}_1=\theta_1\lambda_1$. Substituting in \eqref{eq:dr}, the dynamics of the short rate in the measure $\q$ is given by
\begin{equation}\label{eq:rQ}
\rmd r_t=\bigl(\mu_{\ep(t)}(r_t)+\sigma_{\ep(t)}(r_t)\psi_{\ep(t)}+\eta_{\ep(t)}(r_t)\theta_{\ep(t)}\lambda_{\ep(t)}\bigr)\rmd t+\sigma_{\ep(t)}(r_t)\rmd W_t^{\qq}+\eta_{\ep(t^{-})}(r_{t^{-}})\rmd M_t^{\qq}
\end{equation}

The particular structure of the short rate model proposed in \eqref{eq:dr} requires a special form of the Feynman--Ka\v{c} formula. To the best of our knowledge, such a formula is not available for this type of process in the current literature.

\begin{theorem}
We denote by $F(t,\ep(t),r_t):=F_{\ep(t)}(t,r_t)$ the price of a zero coupon bond with maturity $T$. Consider the coupled Cauchy problem: 
\begin{equation} \label{eq:DFi}
\begin{cases}
\dfrac{\partial F_{i}}{\partial t}(t,x)+\mathcal{L}F_i(t,x)=xF_i(t,x),\quad (t,x)\in[0,T)\times\R, \; i=0,1,\\[2mm]
F_{0}\left(T,x\right)=F_{1}\left(T,x\right)=1,
\end{cases}
\end{equation}
where $\mathcal{L}$ is the operator defined by
\begin{equation*}
\mathcal{L}F_i(t,x):=\bigl(\mu_{i}(x)+\psi_{i}\sigma_{i}(x)\bigr)\dfrac{\partial F_{i}}{\partial x}(t,x)+\dfrac{1}{2}\sigma_{i}^{2}(x)\dfrac{\partial^{2}F_{i}}{\partial x^{2}}\left(t,x\right) 
 +\lambda_{i}^{\qq}\bigl[F_{1-i}\left(t,x+\eta_{i}(x)\right)-F_{i}\left(t,x\right)\bigr].
\end{equation*}

If $F_i(\cdot,\cdot)$, $i=0,1$ is a solution to \eqref{eq:DFi}, subsequently the price of a zero coupon bond is represented as follows:
\begin{equation}\label{eq:FE}
F_{\ep(t)}\left(t,r_t\right)=\E^{\qq}\Bigl[\rme^{-\int_{t}^{T}r_s\rmd s}\mid\mathcal{F}_{t}\Bigr],
\end{equation}   
where the short rate process $r$ satisfies the stochastic differential equation \eqref{eq:rQ}.
\end{theorem}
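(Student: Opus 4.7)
The natural strategy is to introduce the candidate process
\[
G_t:=\exp\!\left(-\int_0^t r_s\,\rmd s\right)F_{\ep(t)}(t,r_t),\qquad t\in[0,T],
\]
show that it is a $\q$-martingale under the dynamics \eqref{eq:rQ}, and then read off the representation \eqref{eq:FE} from $G_t=\E^{\qq}[G_T\mid\calF_t]$ together with the terminal condition $F_0(T,\cdot)=F_1(T,\cdot)=1$. This is the usual route for Feynman--Ka\v{c}; the novelty here is only the bookkeeping needed to accommodate the simultaneous jumps of $r$ and of $\ep$.

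The plan is to apply an Itô--Meyer formula to $G_t$, separating the continuous and the jump contributions. Between switching times $\ep$ is constant, so on $(\tau_{n-1},\tau_n)$ one gets contributions from $\partial_t F_i$, from the drift $\mu_i+\psi_i\sigma_i$ of $r$ under $\q$ via $\partial_x F_i$, from the $\frac12\sigma_i^2\,\partial_{xx}F_i$ Itô correction, and from the discount factor $-r_tF_i(t,r_t)\,\rmd t$; the diffusive martingale part is $\sigma_i(r_t)\,\partial_x F_i(t,r_t)\,\rmd W_t^{\qq}$. At a switching time $\tau_j$ the process $r$ jumps by $\eta_{\ep_j}(r_{\tau_j^-})$ and $\ep$ flips from $i=\ep_j$ to $1-i$, giving the discrete increment
\[
G_{\tau_j}-G_{\tau_j^-}=\exp\!\left(-\int_0^{\tau_j}r_s\,\rmd s\right)\bigl[F_{1-i}(\tau_j,r_{\tau_j^-}+\eta_i(r_{\tau_j^-}))-F_i(\tau_j,r_{\tau_j^-})\bigr].
\]
Summing these increments against $\rmd N_t$ and using $\rmd N_t=\rmd M_t^{\qq}+\theta_{\ep(t)}\lambda_{\ep(t)}\,\rmd t$ from \eqref{eq:Mq} splits the jump contribution into a $\q$-martingale piece and a compensator whose integrand is exactly $\lambda_i^{\qq}[F_{1-i}(t,r_t+\eta_i(r_t))-F_i(t,r_t)]$.

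Assembling the $\rmd t$ terms one obtains the coefficient
\[
\exp\!\left(-\int_0^t r_s\,\rmd s\right)\!\left[\,\partial_tF_i(t,r_t)+\mathcal{L}F_i(t,r_t)-r_tF_i(t,r_t)\right],
\]
which vanishes identically on $[0,T)$ by hypothesis \eqref{eq:DFi}. What remains is the sum of the diffusive stochastic integral against $W^{\qq}$ and the compensated jump integral against $M^{\qq}$, both of which are local $\q$-martingales. Promoting $G$ from a local martingale to a genuine martingale requires a standard integrability condition (for instance, local boundedness of $F_i$, $\partial_x F_i$ and $\eta_i$ on bounded $r$-intervals together with finite second moments of $r$), at which point the optional sampling argument $G_t=\E^{\qq}[G_T\mid\calF_t]$ applies and, since $F_i(T,\cdot)\equiv1$, yields \eqref{eq:FE}.

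The only genuine obstacle is the jump term: one must be careful that the value of $\ep$ used inside $F$ is the post-jump value $1-i$ while the value of $r$ inside the jump is $r_{t^-}+\eta_i(r_{t^-})$, so the compensator produced by $\rmd N_t=\rmd M_t^{\qq}+\theta_{\ep(t)}\lambda_{\ep(t)}\,\rmd t$ must match exactly the non-local term appearing in $\mathcal{L}$. Once this correspondence is checked, the PDE cancellation is automatic and the proof reduces to verifying the integrability that upgrades the local $\q$-martingale property to a true martingale.
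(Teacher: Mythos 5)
Your proposal is correct and follows essentially the same route as the paper: both apply the It\^o formula for Markov-modulated jump-diffusions to the discounted price $\rme^{-\int_0^t r_s\,\rmd s}F_{\ep(t)}(t,r_t)$ (the paper first expands $F_{\ep(t)}(t,r_t)$ and then uses the product rule, which is only a cosmetic reordering), cancel the drift via the PDE, identify the remaining $W^{\qq}$- and $M^{\qq}$-integrals as martingale terms, and conclude from the terminal condition under an unstated integrability assumption. The care you take with the compensator matching the non-local term in $\mathcal{L}$ is exactly the point the paper's It\^o expansion encodes.
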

\begin{proof}
By adapting the It\^o formula to the process $F_{\ep(t)}(t,r_t)$ (see for instance \cite{jeanblanc2009mathematical}, Section 10.2.2), we obtain
\begin{multline*}
F_{\ep(T)}(T,r_{T})  = F_{\ep(t)}(t,r_t) + \int_{t}^{T}\sigma_{\ep(s)}(r_s)\frac{\partial F_{\ep(s)}}{\partial x}(s,r_s)\rmd W_{s}^{\qq} \\ + \int_{t}^{T} \bigl[F_{1-\ep(s)} \left(s,r_{s^-} + \eta_{\ep(s)}(r_{s^-}) \right) -  F_{\ep(s)}\left(s,r_{s^-}\right)\bigr]\rmd M_{s}^{\qq} + \int_{t}^{T} \left[\frac{\partial F_{\ep(s)}}{\partial t}(s,r_s) + \mathcal{L}F_{\ep(s)}(s,r_s) \right] \rmd s  
\end{multline*}
Using \eqref{eq:DFi}, the last integral is equal to $\int_{t}^{T}r_sF_{\ep(s)}(s,r_s)\rmd s$. Consider now the process $Z=\{Z_t\}_{t\in[0,T]}$, defined as $Z_{t}:=\rme^{-\int_{0}^{t}r_s\rmd s}F_{\ep(t)}(t,r_t)$. By applying the product rule we obtain:
\begin{align*}
Z_{T} =\rme^{-\int_{0}^{t}r_s\rmd s}F_{\ep(s)}(t,r_t)&+\int_{t}^{T}\rme^{-\int_{0}^{s}r_u\rmd u}\sigma_{\ep(s)}(r_s)\frac{\partial F_{\ep(s)}}{\partial x}(s,r_s)\rmd W_{s}^{\qq}\\
&+\int_{t}^{T}\rme^{-\int_{0}^{s}r_u\rmd u}\bigl[F_{1-\ep(s)}\left(s,r_{s^-}+\eta_{\ep(s)}(r_{s^-})\right)-F_{\ep(s)}\left(s,r_{s^-}\right)\bigr]\rmd M_{s}^{\qq}.
\end{align*}
With the appropriate integrability conditions, the process $Z$ is a martingale and we finally obtain \eqref{eq:FE}.
\end{proof}
Notice that \eqref{eq:FE} provides the price of a zero coupon bond when the equivalent measure $\q$ is selected as a risk neutral measure, once the free parameters $\theta_i$ and $\Psi_i$ for $i=0,1$ are provided.
%%%%%%%%%%%%%%%%%%%%%%%%%%%%%%%%%%%%%%%%%%%%%%%%%%%%

\section{Unbiased expectation hypothesis for jump-telegraph models}\label{sec4}

The unbiased expectation hypothesis postulates that, in an efficient market, the instantaneous forward rate at time $t$ with maturity $T$ must be equal to the expected future spot rate, that is  $f_{\ep(t)}(t,T,r_t)=\E^{\qq}[\,r_{\scriptscriptstyle T}\,|\,\mathcal{F}_t].$ Because the short rate is not deterministic, these values do not coincide, and their difference is a measure known as the convexity adjustment, see \cite{bjork2009arbitrage}, Section 26.4 and \cite{gaspar2008convexity}. 
We exploit this hypothesis and the jump-telegraph model to obtain a suitable analytical approximation of the forward rate.

%%%%%%%%%%%%%%%%%%%%%%%%
\subsection{Jump-telegraph Merton model}
By adapting the classical Merton model for the short rate, our first model is established by the following stochastic differential equation:
\begin{equation*}
\rmd r_t=\mu_{\ep(t)}\rmd t+\eta_{\ep(t^{-})}\rmd N_t\,.
\end{equation*}
Notice that in this case, we have $\sigma_i=0$ for $i=0,1$; therefore, under the equivalent measure  $\q$, the dynamics of the short rate are maintained; however, the intensities of the process $N$ are $\lambda_i^\q = \theta_i\lambda_i$. 
The solution is given by
\begin{equation*}
r_t=r_0+X_t+J_t,
\end{equation*}
where 
\begin{equation*}
X_t=\int_0^t \mu_{\ep(s)}\rmd s\quad\text{y}\quad J_t=\int_0^t \eta_{\ep(s^-)}\rmd N_s=\sum_{j=1}^{N_t}\eta_{\ep_j}.
\end{equation*}
Subsequently, for all $t$, $t\in[0,T]$, we have that
\begin{equation}\label{eq:Er}
\E^{\qq}\left[r_{T}\mid\mathcal{F}_{t}\right] =\E^{\qq}\left[r_{t}+\int_{t}^{T}\mu_{\varepsilon\left(s\right)}\rmd s+\int_{t}^{T}\eta_{\varepsilon\left(s^{-}\right)}\rmd N_{s}\,\Bigl|\,\mathcal{F}_{t}\Bigr.\right]
\end{equation}
By the Markov property and applying the distributional equalities in \eqref {eq:MSP}, we can write \eqref{eq:Er} as follows:
\begin{align}
\E^{\qq}\left[r_{T}\mid\mathcal{F}_{t}\right] &=r_{t}+\E_i^{\qq}\Biggl[\int_{0}^{T-t}\mu_{\tilde{\varepsilon}(s)}\rmd s+\sum_{j=1}^{\widetilde{N}_{T-t}}\eta_{\tilde{\varepsilon}_{j}}\Biggr] \notag\\
&=r_{t}+\E_{i}^{\qq}\Bigl[\widetilde{X}_{T-t}+\widetilde{J}_{T-t}\Bigr]=r_{t}+\E_{i}^{\qq}\Bigl[\widetilde{Y}_{T-t}\Bigr]. \label{eq:HEM1}
\end{align}  
By Proposition \ref{pro:EY}, we have that the conditional expectation is given by 
\begin{align}
\E_i^{\qq}&\Bigl[\widetilde{Y}_{T-t}\Bigr]=\notag \\
&\frac{1}{2\lambda^{\qq}}\left[(\lambda_1^{\qq}d_0+\lambda_0^{\qq}d_1)(T-t)+(-1)^i\lambda_i^{\qq}(d_0-d_1)\biggl(\frac{1-\exp(-2\lambda^{\qq} (T-t))}{2\lambda^{\qq}}\biggr)\right], \label{eq:mYtilde}
\end{align}
for $i=0,1$, where $2\lambda^{\qq}=\lambda_0^{\qq}+\lambda_1^{\qq}$ y $d_i=\mu_i+\lambda_i^{\qq}\eta_i$, $i=0,1$.
%%%%%%%%%%%%%%%%%%%%%%

\subsection{Jump-telegraph Dothan model}\label{ejG}
By adapting the classical Dothan model for the short rate, our second model is given by the following stochastic differential equation that we term the \emph{jump-telegraph Dothan's model}:
\begin{equation*}
\rmd r_t=r_{t^-}\bigl(\mu_{\ep(t)}\rmd t+\eta_{\ep(t^{-})}\rmd N_t\bigr).
\end{equation*}
As in the previous example, the dynamic under $\q$ is conserved, and only the intensities of the process $N$ are affected.
It is possible to write the solution in terms of the stochastic exponential as follows:
\begin{align*}
r_{t}=r_{0}\mathcal{E}_{t}\left(X+J\right)&=r_{0}\exp\left(\int_{0}^{t}\mu_{\varepsilon\left(s\right)}\rmd s\right)\prod_{j=1}^{N_{t}}\left(1+\eta_{\varepsilon_{j}}\right) \\
&=r_{0}\exp\left(\int_{0}^{t}\mu_{\varepsilon\left(s\right)}\rmd s+\int_{0}^{t}\log\left(1+\eta_{\varepsilon\left(s^{-}\right)}\right)\rmd N_{s}\right). 
\end{align*}
Let $\widehat{J}_{t}=\sum_{j=1}^{N}\log\left(1+\eta_{\varepsilon_{j}}\right)$;  subsequently,
$ r_{t}=r_{0}\exp\left(X_{t}+\widehat{J}_{t}\right)$. So that, for each $t\in[0,T]$ we have that
\[
r_{T}=r_{t}\exp\left(\int_{t}^{T}\mu_{\varepsilon\left(s\right)}ds+\int_{t}^{T}\log\left(1+\eta_{\varepsilon\left(s^{-}\right)}\right)dN_{s}\right)
\]
Subsequently, the expected value is as follows:
\begin{align*}
\E^{\qq}\left[r_{T}\mid\mathcal{F}_{t}\right] =r_{t}\E^{\qq}\left[\exp\left(\int_{t}^{T}\mu_{\varepsilon\left(s\right)}ds+\int_{t}^{T}\log\left(1+\eta_{\varepsilon\left(s^{-}\right)}\right)\right)dN_{s}\mid\mathcal{F}_{t}\right]
\end{align*}
By the Markov property and \eqref{eq:MSP}, we can write
\begin{align}
\E^{\qq}\left[r_{T}\mid\mathcal{F}_{t}\right] & =r_{t}\E_i^{\qq}\left[\exp\left(\int_{0}^{T-t}\mu_{\widetilde{\varepsilon}\left(s\right)}ds+\sum_{j=1}^{\widetilde{N}_{T-t}}\log\left(1+\eta_{\widetilde{\varepsilon}}\right)\right)\right] \notag\\
 & =r_{t}\E_i^{\qq}\left[\exp\left(\widetilde{X}_{T-t}+\widetilde{J}_{T-t}\right)\right]=r_t\E_{i}^{\qq}\Bigl[\exp\left(\widetilde{Y}_{T-t}\right)\Bigr]. \label{eq:HEM2}
\end{align}
By Proposition \ref{pro:ME}, we find that the exponential 
\begin{multline}
\E_{i}^{\qq}\Bigl[\exp\left(\widetilde{Y}_{T-t}\right)\Bigr] 
=\exp\left( (T-t)(\zeta-\lambda^{\qq})\right)\left[\cosh\left((T-t)\sqrt{D}\right) \right. \\ \left. +(-1)^i\left(\chi-\nu  + (-1)^i\lambda_{i}^{\qq}\left(1+\eta_{i}\right)\right)\frac{\sinh\left((T-t)\sqrt{D}\right)}{\sqrt{D}}\right] \label{eq:ME}
\end{multline}
for $i=0,1$, where $2\zeta=\mu_{0}+\mu_{1}$, $2\chi=\mu_{0}-\mu_{1}$,
$2\nu=\lambda_{0}^{\qq}-\lambda_{1}^{\qq}$, 
$D=\left(\chi-\nu\right)^{2}+\lambda_{0}^{\qq}\lambda_{1}^{\qq}\left(1+\eta_{0}\right)\left(1+\eta_{1}\right)$.
%%%%%%%%%%%%%%%%%%%%%%%%%

\subsection{Jump-telegraph diffusion Merton model}
Now, we add a diffusion term to the jump-telegraph Merton model:
\begin{equation*}
\rmd r_t=\mu_{\ep(t)}\rmd t+\sigma_{\ep(t)}\rmd W_t+\eta_{\ep(t^{-})}\rmd N_t\,.
\end{equation*}
Under measure $\q$, this is given by
\begin{equation*}
\rmd r_t=(\mu_{\ep(t)}+\sigma_{\ep(t)}\psi_{\ep(t)})\rmd t+\sigma_{\ep(t)}\rmd W_t^{\qq}+\eta_{\ep(t^{-})}\rmd N_t
\end{equation*}
The solution is now written as follows:
\begin{equation*}
r_{t}=r_{0}+X_t+Z_t+J_t,
\end{equation*}
where 
\begin{equation*}
X_t=\int_0^t \bigl(\mu_{\ep(s)}+\sigma_{\ep(s)}\psi_{\ep(s)}\bigr)\rmd s,\quad Z_{t}=\int_{0}^{t}\sigma_{\ep(s)}\rmd W_{s}^{\qq} \quad\text{and}\quad J_t=\int_0^t \eta_{\ep(s^-)}\rmd N_s=\sum_{j=1}^{N_t}\eta_{\ep_j}.
\end{equation*}
Following similar steps as in the non-diffusive case, we obtain
\begin{align}
\E^{\qq}\left[r_{T}\mid\mathcal{F}_{t}\right] &=r_{t}+\E_{i}^{\qq}\Bigl[\widetilde{X}_{T-t}+\widetilde{Z}_{T-t}+\widetilde{J}_{T-t}\Bigr] \notag\\
&=r_{t}+\E_{i}^{\qq}\Bigl[\widetilde{X}_{T-t}+\widetilde{J}_{T-t}\Bigr]=r_{t}+\E_{i}^{\qq}\Bigl[\widetilde{Y}_{T-t}\Bigr],\label{ejem 3}
\end{align} 
because $\E_{i}^{\qq}[\widetilde{Z}_{T-t}]=0$ for $i=0,1$. The last expected value exhibits the same functional formula as \eqref{eq:mYtilde}, replacing $d_i$ by $\tilde{d}_i=\mu_i+\sigma_{i}\psi_{i}+\lambda_i^{\qq}\eta_i$, $i=0,1$.
%%%%%%%%%%%%%%%%%%%%%%%%%%

\subsection{Jump-telegraph diffusion Dothan model}
By adding a diffusion term to the jump-telegraph Dothan Model, we obtain the following stochastic evolution for the short rate:
\begin{equation*}
\rmd r_t=r_{t^-}\bigl(\mu_{\ep(t)}\rmd t+\sigma_{\ep(t)}\rmd W_t+\eta_{\ep(t^{-})}\rmd N_t\bigr)
\end{equation*}
which, under measure $\q$,  is written as follows:
\begin{equation*}
\rmd r_t=r_{t^-}\bigl((\mu_{\ep(t)}+\sigma_{\ep(t)}\psi_{\ep(t)})\rmd t+\sigma_{\ep(t)}\rmd W_t^{\qq}+\eta_{\ep(t^{-})}\rmd N_t\bigr)
\end{equation*}
The solution is written in terms of the stochastic exponential as follows:
\begin{align*}
r_{t} & =r_{0}\mathcal{E}_{t}\left(X+Z+J\right)\\
&=r_{0}\exp\left(\int_{0}^{t}(\mu_{\ep(s)}+\sigma_{\ep(s)}\psi_{\ep(s)})\rmd s+\int_0^t\sigma_{\ep(s)} \rmd W_s-\frac12\int_0^t\sigma_{\ep(s)}^2\rmd s\right)\prod_{j=1}^{N_{t}}\left(1+\eta_{\varepsilon_{j}}\right)\\
&=r_{0}\exp\left(\int_{0}^{t}\left(\mu_{\ep(s)}+\sigma_{\ep(s)}\psi_{\ep(s)}-\frac12\sigma_{\ep(s)}\right)\rmd s+\int_0^t \sigma_{\ep(s)}\rmd W_s+\int_{0}^{t}\log(1+\eta_{\ep(s^{-})})\rmd N_{s}\right)\\
  & = r_0\exp\left(\widehat{X}_t+Z_t+\widehat{J}_t\right).
\end{align*}
To establish a useful formula for the expected value of the future spot rate, we require that $\sigma_0=\sigma_1=\sigma$. Following similar steps as those performed for the non-diffusive case, we obtain
\begin{align}
\E^{\qq}\left[r_{T}\mid\mathcal{F}_{t}\right] & =r_{t}\E_i^{\qq}\left[\exp\left(\widetilde{X}_{T-t}+\widetilde{Z}_{T-t}+\widetilde{J}_{T-t}\right)\right] \notag \\
&=r_t\E_{i}^{\qq}\left[\exp\left(\widetilde{Y}_{T-t}\right)\right]\E_i^{\qq}\left[\exp\left(\widetilde{Z}_{T-t}\right)\right] \notag \\
&=r_t\E_{i}^{\qq}\left[\exp\left(\widetilde{Y}_{T-t}\right)\right]\exp\left(\frac{\sigma^2}{2}(T-t)\right).\label{espmodel4}
\end{align}
Again, in this case, the expected value exhibits the same functional form as that in \eqref{eq:ME}, replacing $\mu_i$ by $\tilde{\mu}_i=\mu_i+\sigma\psi-\sigma^2/2$ for $i=0,1$.
%%%%%%%%%%%%%%%%%%%%%%%%%%%%%%%%%%%%%%%%%%%%%%%%%%%%%%

\section{Zero coupon bond price and numerical results}\label{sec5}
We will calculate the price of a zero coupon bond for each model of the previous section by two methods: first, using standard numerical methods for the solution of the Cauchy problem in \eqref{eq:DFi}; second, using  the unbiased expectation hypothesis, together with the relationship
\begin{equation}\label{eq:aproxB}
F_{\ep(t)}(t,r_t)=\exp\left(-\int_t^T f_{\ep(t)}(t,s,r_t)\rmd s\right).
\end{equation}
%%%%%%%%%%%%%%%%%%%%%%%%

\subsection{Zero coupon bond price in the jump-telegraph Merton model}
In the jump-telegraph Merton model the system \eqref{eq:DFi} reduces to
\begin{equation*}
\begin{cases}
\dfrac{\partial F_{i}}{\partial t}(t,x)+\mu_{i}\dfrac{\partial F_{i}}{\partial x}(t,x) +\lambda_{i}^{\qq}\left[F_{1-i}(t,x+\eta_{i})-F_{i}(t,x)\right]=xF_{i}\left(t,x\right),\quad i=0,1\\[2mm]
F_{0}(T,x)=F_{1}(T,x) =1.
\end{cases}
\end{equation*}
The equations can be solved by suitable variable separation and reduced to a coupled system of ordinary differential equations. In Table \ref{tabla1}, some numerical results are shown for specific parameters.

Meanwhile, by \eqref{eq:aproxB} and \eqref{eq:HEM1}, the approximated price of the zero coupon bond obtained from the expectation hypotheses is given by
\begin{align}
\exp\left(-\int_{t}^{T}\E^{\qq}[r_{s}\mid\mathcal{F}_{t}] \rmd s\right) &=\exp\left(-\int_{t}^{T}r_{t}\rmd s-\int_{t}^{T}\E_{i}^{\qq}[\widetilde{Y}_{s-t}]\rmd s\right) \notag \\
&=\exp\bigl(r_{t}C(t,T)+D_{i}(t,T)\bigr). \label{bonomodel1}
\end{align}
Here,
\[
C(t,T)=-(T-t),
\]
and
\begin{align*}
D_{i}&(t,T)=\\
&-\frac{1}{2\lambda^{\qq}}\left[\left(\lambda_{1}^{\qq}d_{0}+\lambda_{0}^{\qq}d_{1}\right)\frac{\left(T-t\right)^{2}}{2}+\left(-1\right)^{i}\lambda_{i}^{\qq}(d_{0}-d_{1})\left(\frac{2\lambda^{\qq}\left(T-t\right)+\rme^{-2\lambda^{\qq}\left(T-t\right)}-1}{2\lambda^{\qq}\cdot2\lambda^{\qq}}\right)\right].
\end{align*}
Observe that, under the unbiased expectations hypothesis, the bond price exhibits an affine structure.

\textbf{Numerical results:} Data provided for the following parameter values: $\mu_0=-0.02,\mu_1=0.05,\lambda_0^{\qq}=1,\lambda_1^{\qq}=2,\eta_0=0.01,\eta_1=-0.02$.

\begin{table}[ht!]
\begin{centering}
\begin{tabular}{|c|c|c|c|c|}
\hline 
{\textbf{Initial rate: $5\%$}}& \multicolumn{2}{c}{\textbf{ODE}} & \multicolumn{2}{|c|}{\textbf{Expectation}}\\
\hline
\hline 
Maturity & $F_{0}$ & $F_{1}$& $F_{0}$ & $F_{1}$\tabularnewline
\hline  
1 month & $0.995875$ & $0.995811$ & 0.995875 & 0.995811    \\
\hline 
1 quarter & $0.987844$ & $0.987358$ & 0.987843 & 0.987355     \\
\hline 
1 semester & $0.976244$ & $0.974689$ & 0.976239 & 0.974672 \\
\hline 
1 year & $0.954317$ & $0.950064$ & 0.954264 & 0.949927  \\
\hline 
\end{tabular}
\par\end{centering}
\caption{\textbf{Zero coupon bond prices: Jump-telegraph Merton model.}\label{tabla1}
}
\end{table}
%%%%%%%%%%%%%%%%%%%%%%%%%%

\subsection{Zero coupon bond price in the jump-telegraph Dothan model}\label{sub:52}
In the jump-telegraph Dothan case, the system \eqref{eq:DFi} reduces to
\begin{equation*}
\begin{cases}
\dfrac{\partial F_{i}}{\partial t}(t,x)+x\mu_{i}\dfrac{\partial F_{i}}{\partial x}(t,x)+\lambda_{i}^{\qq}\left[F_{1-i}(t,x(1+\eta_{i}))-F_{i}(t,x)\right]=xF_{i}(t,x),\quad i=0,1,\\
F_{0}(T,x)=F_{1}(T,x)=1.
\end{cases}
\end{equation*}
The system is solved numerically using a finite difference up-wind scheme for the transport terms.

By \eqref{eq:aproxB}, \eqref{eq:HEM2}, and \eqref{eq:ME}, the price of the bond, using the approximation provided by the expectation hypothesis, is given by
\begin{align*}
\exp\left(-\int_{t}^{T}\E\left[r_{s}\mid\mathcal{F}_{t}\right]ds\right)  &=\exp\left(-r_t\int_{t}^{T}\E_{i}^{\qq}\Bigl[\exp\bigl(\widetilde{Y}_{s-t}\bigr)\Bigr]\rmd s\right)\\
&=\exp\left(-r_t\bigl[G(t,T)+E_i H(t,T)\bigr]\right),
\end{align*}
where
\begin{align*}
G(t,T)&=\frac{\rme^{(T-t)(\zeta-\lambda^{\qq})}\left[(\zeta-\lambda^{\qq})\cosh\bigl((T-t)\sqrt{D}\bigr)-\sqrt{D}\sinh\bigl((T-t)\sqrt{D}\bigr)\right]-(\zeta-\lambda^{\qq})}{(\zeta-\lambda^{\qq})^2-D},\\[1mm]
E_i&=(-1)^i\frac{\chi-\nu+(-1)^i\lambda_{i}^{\qq}\left(1+\eta_{i}\right)}{\sqrt{D}},\quad i=0,1,\quad \text{and} \\[1mm]
H(t,T)&=\frac{\rme^{(T-t)(\zeta-\lambda^{\qq})}\left[(\zeta-\lambda^{\qq})\sinh\bigl((T-t)\sqrt{D}\bigr)-\sqrt{D}\cosh\bigl((T-t)\sqrt{D}\bigr)\right]+\sqrt{D}}{(\zeta-\lambda^{\qq})^2-D}.
\end{align*}
\textbf{Numerical results:} The solution is provided for the following parameter values: $\mu_0=-0.1,\mu_1=0.25, \lambda_0^{\qq}=1.0, \lambda_1^{\qq}=2.0,\eta_0=0.1,\eta_1=-0.2$.
\begin{table}[ht!]
\begin{centering}
\begin{tabular}{|c|c|c|c|c|}
\hline 
{\textbf{Initial rate: $5\%$}}& \multicolumn{2}{c}{\textbf{Finite Differences}} & \multicolumn{2}{|c|}{\textbf{Expectation}}\\
\hline
\hline 
Maturity & $F_{0}$ & $F_{1}$& $F_{0}$ & $F_{1}$ \tabularnewline
\hline  
1 month & $0.995842$ & $0.995869$ & $0.995843$ & $0.995867$ \\
\hline 
1 quarter & $0.987594$ & $0.987786$& $0.987596$  & $0.987781$ \\
\hline 
1 semester & $0.975430$ & $0.976039$& $0.975431$ & $0.976029$\\
\hline 
1 year & $0.951962$ & $0.953645$& $0.951955$ & $0.953615$ \\
\hline 
\end{tabular}
\par\end{centering}
\caption{\textbf{Zero coupon bond prices: Jump-telegraph Dothan model.}\label{tabla2}
}
\end{table}

%%%%%%%%%%%%%%%%%%%%%%%%
\subsection{Zero coupon bond price in the jump-telegraph diffusion Merton model.}
In the case of a jump-telegraph Merton model with diffusion, the system \eqref{eq:DFi} reduces to
\begin{equation}
\begin{cases}
\dfrac{\partial F_{i}}{\partial t}(t,x)+(\mu_{i}+\psi_{i}\sigma_{i})\dfrac{\partial F_{i}}{\partial x}(t,x)+\dfrac{\sigma_{i}^{2}}{2}\dfrac{\partial^{2}F_{i}}{\partial x^{2}}(t,x) +\lambda_{i}^{\qq}\left[F_{1-i}(t,x+\eta_{i})-F_{i}(t,x)\right] =xF_{i}\left(t,x\right),\\[1mm]
i=0,1,\\
F_{0}\left(T,x\right)  = F_{1}\left(T,x\right)=1.
\label{eq:75-1-1-1}
\end{cases}
\end{equation}
This coupled system can be transformed in a system of two ordinary differential equations. The effect of $\lambda^{\qq}$ in the first term is negligible for long maturities. The opposite occurs with volatility.

To obtain the zero coupon bond prices under the expectation hypothesis, we applied the fact that the expected value  $\E\left[Z_{t}\right]$ is equal to zero. The calculations are similar to those performed in the jump-telegraph Merton model, and the bond price matches that in \eqref{bonomodel1} but with $\tilde{d}_i=\mu_i+\sigma_i\psi_i+\eta_i\lambda_i^{\qq}$, in place of $d_i$, $i=0,1$. 

\textbf{Numerical results}: Data provided for the following parameter values: $\mu_0=-0.02,\mu_1=0.05,\psi_0=0.5,\psi_1=1.0,\sigma_0=0.02,\sigma_1=0.06,\lambda_0^{\qq}=1.0,\lambda_1^{\qq}=2.0,\eta_0=0.01,\eta_1=-0.02$.
\begin{table}[ht!]
\begin{centering}
\begin{tabular}{|c|c|c|c|c|c|c|}
\hline 
{\textbf{Initial rate: $5\%$}}& \multicolumn{2}{c}{\textbf{ODE}} & \multicolumn{2}{|c|}{\textbf{Expectation}} \\
\hline
\hline 
Maturity & $F_{0}$ & $F_{1}$& $F_{0}$ & $F_{1}$ \tabularnewline
\hline  
1 month & 0.995836 & 0.995613 & 0.995836  &	0.995613\\
\hline 
1 quarter & 0.987429 & 0.985732 &	0.987427&	0.985721
  \\
\hline 
1 semester & 0.974318 & 0.968920 &  0.974294 &	0.968830 \\
\hline 
1 year & 0.945471 & 0.930939 &	0.945206 & 0.930256 \\
\hline 
\end{tabular}
\par\end{centering}
\caption{\textbf{Zero coupon bond prices: Jump-telegraph diffusion Merton model.}\label{tabla3}
}
\end{table}

%%%%%%%%%%%%%%%%%%%%%%%%%%%%%%%%%%%%%%%%%%%%%%%%%%%%%%%%%%%%%%%%%%%%%%%%

\subsection{Zero coupon bond price in the jump-telegraph diffusion Dothan model}

Under this model, the system \eqref{eq:DFi} for the prices of the zero-coupon bonds is given by
\begin{equation}
\begin{cases}
\dfrac{\partial F_{i}}{\partial t}(t,x)+(\mu_{i}+\psi_{i}\sigma_{i})x\dfrac{\partial F_{i}}{\partial x}(t,x)+\dfrac{\sigma_{i}^{2}x^2}{2}\dfrac{\partial^{2}F_{i}}{\partial x^{2}}(t,x) +\lambda_{i}^{\qq} \left[F_{1-i}(t,x(1+\eta_{i}))-F_{i}(t,x)\right]\\
=xF_{i}\left(t,x\right),\quad i=0,1,\\[1mm]
F_{0}\left(T,x\right) =F_{1}\left(T,x\right)  =1.
\end{cases}\label{eq:DyS}
\end{equation}
The system is solved numerically by an implicit, up-wind finite difference scheme. 
Moreover, from \eqref{eq:aproxB} and \eqref{espmodel4}, we obtain that the price of the bond under the expectation hypothesis can be approximated by
\begin{align*}
\exp\left(-\int_{t}^{T}\E\left[r_{s}\mid\mathcal{F}_{t}\right]ds\right)&=\exp\left(-r_t\int_{t}^{T}\E_{i}^{\qq}\left[\exp\left(\widetilde{Y}_{s-t}\right)\right]\exp\left(\frac{\sigma^2}{2}(s-t)\right)\rmd s\right) \\
&=\exp\left(-r_t\bigl[\widetilde{G}(t,T)+E_i \widetilde{H}(t,T)\bigr]\right), 
\end{align*}
where
\begin{align*}
\widetilde{G}(t,T)&=\frac{\rme^{(T-t)(\zeta-\lambda^{\qq}+\sigma^2/2)}\left[(\zeta-\lambda^{\qq}+\sigma^2/2)\cosh\bigl((T-t)\sqrt{D}\bigr)-\sqrt{D}\sinh\bigl((T-t)\sqrt{D}\bigr)\right]-(\zeta-\lambda^{\qq}+\sigma^2/2)}{(\zeta-\lambda^{\qq}+\sigma^2/2)^2-D},\\[1mm]
\widetilde{H}(t,T)&=\frac{\rme^{(T-t)(\zeta-\lambda^{\qq}+\sigma^2/2)}\left[(\zeta-\lambda^{\qq}+\sigma^2/2)\sinh\bigl((T-t)\sqrt{D}\bigr)-\sqrt{D}\cosh\bigl((T-t)\sqrt{D}\bigr)\right]+\sqrt{D}}{(\zeta-\lambda^{\qq}+\sigma^2/2)^2-D}.
\end{align*}
Here $E$ is defined in Section \ref{sub:52}, and $\zeta$, $\chi$, $\nu$, $D$ are given in \eqref{eq:ME} with $\tilde{\mu}_i=\mu_i+\sigma\psi-\sigma^2/2$ in place of $\mu_i$, $i=0,1$.

\textbf{Numerical results}: The solution is provided for the following parameter values: $\mu_0=-0.1,\mu_1=0.25,\psi_0=\psi_1=1.0,\sigma_0=\sigma_1=0.4,\lambda_0^{\qq}=1.0, \lambda_1^{\qq}=2.0,\eta_0=0.1,\eta_1=-0.2$. 
\begin{table}[ht!]
\begin{centering}
\begin{tabular}{|c|c|c|c|c|}
\hline 
{\textbf{Initial rate: $5\%$}}& \multicolumn{2}{c}{\textbf{Finite differences}} & \multicolumn{2}{|c|}{\textbf{Expectation}}\\
\hline
\hline 
Maturity & $F_{0}$ & $F_{1}$& $F_{0}$ & $F_{1}$  \tabularnewline
\hline  
1 month & $0.995774$ & $0.995798$ & $0.995773$ & $0.995797$  \\
\hline 
1 quarter & $0.986965$ & $0.987161$& $0.986959$ & $0.987156$  \\
\hline 
1 semester & $0.972865$ & $0.973544$& $0.972844$ & $0.973522$ \\
\hline 
1 year & $0.941475$ & $0.943588$& $0.941334$  & $0.943434$ \\
\hline 
\end{tabular}
\par\end{centering}
\caption{\textbf{Zero coupon bond prices: Jump-telegraph diffusion Dothan model.}\label{tabla 4}
}
\end{table}

%%%%%%%%%%%%%%%%%%%%%%%%%%%%%%%%%%%%%%%%
\section{Conclusions}
We herein presented new results in two main directions. First, we used the properties of the telegraph processes with jumps to prove a Feynman--Ka\v{c} representation theorem for zero coupon bond pricing. Meanwhile, an analytical solution was obtained for different dynamics of the short rate under telegraph processes with jumps. To obtain this solution, the unbiased expectation hypothesis was assumed. Furthermore, it was shown that the cost of assuming this hypothesis was low, because the numerical results for the Cauchy problem were sufficiently close to the analytical approximations.

To our best knowledge, none have incorporated these types of processes for fixed income modeling, nor analytical formulas to approximate the prices of zero coupon bonds when the dynamics of the short rate allow for jumps and regime switches. Owing to the significant interest in the effect of these aspects on the yield curve, we believe that this work will be beneficial.  

A natural extension of this work is the search of analytical formulas for more robust models that include, for instance, mean reversion (CIR, Hull--White) and the valuation of more complex fixed income instruments.

%%%%%%%%%%%%%%%%%%%

%\bibliographystyle{plain}
%\bibliography{biblioTB}

\begin{thebibliography}{10}

\bibitem{bjork2009arbitrage} Tomas Bj{\"o}rk. \textit{Arbitrage theory in continuous time}. Oxford university press, 2009.

\bibitem{bremaud1981point} Pierre Br{\'e}maud. \textit{Point processes and queues: martingale dynamics}. Springer-Verlag, 1981.

\bibitem{gaspar2008convexity} Raquel M Gaspar and Agatha Murgoci. \textit{Convexity adjustments}. Encyclopedia of Quantitative Finance, 2008.

\bibitem{jeanblanc2009mathematical} Monique Jeanblanc, Marc Yor, and Marc Chesney. \textit{Mathematical methods for financial markets}. Springer Science \& Business Media, 2009.

\bibitem{johannes2004statistical} Michael Johannes. The statistical and economic role of jumps in continuous-time interest rate models. \textit{The Journal of Finance}, 59(1):227-260, 2004.

\bibitem{kolesnik2013telegraph} Alexander D Kolesnik and Nikita Ratanov. \textit{Telegraph processes and option pricing}. Springer, 2013.

\bibitem{landen2000bond} Camilla Landen. Bond pricing in a hidden markov model of the short rate. \textit{Finance and Stochastics}, 4(4):371-389, 2000.

\bibitem{lopez2012kac} Oscar L\'opez and Nikita Ratanov. Kac's rescaling for jump-telegraph processes. \textit{Statistics \& Probability Letters}, 82(10):1768-1776, 2012.

\bibitem{lopez2014asymmetric} Oscar L\'opez and Nikita Ratanov. On the asymmetric telegraph processes. \textit{Journal of Applied Probability}, 51(2):569-589, 2014.

\bibitem{mamon2007hidden} Rogemar S Mamon and Robert James Elliott. \textit{Hidden markov models in finance}, volume 460. Springer, 2007.

\bibitem{orsingher1999vector} Enzo Orsingher. On the vector process obtained by iterated integration of the telegraph signal. \textit{Georgian Mathematical Journal}, 6(2):169-178, 1999.

\bibitem{pelsser2003mathematical} Antoon Pelsser. Mathematical foundation of convexity correction. \textit{Quantitative Finance}, 3(1):59-65, 2003.

\bibitem{wu2007exact} Shu Wu and Yong Zeng. An exact solution of the term structure of interest rate under regime-switching risk. In \textit{Hidden Markov models in finance}, pages 1-14. Springer, 2007.

\bibitem{xi2011jump} Fubao Xi and Gang Yin. Jump-diffusions with state-dependent switching: existence and uniqueness, Feller property, linearization, and uniform ergodicity. Science China Mathematics, 54(12):2651-2667, 2011.

\end{thebibliography}

\end{document}